\newtheorem{lemma}{Lemma}
\newtheorem{corollary}{Corollary}
\def\tsc#1{\csdef{#1}{\textsc{\lowercase{#1}}\xspace}}
\begin{document}
\let\WriteBookmarks\relax
\def\floatpagepagefraction{1}
\def\textpagefraction{.001}
\shortauthors{Shanshan Wang et~al.}

\title [mode = title]{Algorithms with improved delay for enumerating connected induced subgraphs of a large cardinality}

\author[1]{Shanshan Wang}

\address[1]{Shantou University, China}

\author[1]{Chenglong Xiao}
\cormark[1]

\author%
[2]
{Emmanuel Casseau}

\address[2]{Univ Rennes, INRIA, CNRS, IRISA, France}

\cortext[cor1]{Corresponding author:Chenglong Xiao, email:chlxiao@stu.edu.cn}

\begin{abstract}
The problem of enumerating all connected induced subgraphs of a given order $k$ from a given graph arises in many practical applications: bioinformatics, information retrieval, processor design,to name a few. The upper bound on the number of connected induced subgraphs of order $k$ is $n\cdot\frac{(e\Delta)^{k}}{(\Delta-1)k}$, where $\Delta$ is the maximum degree in the input graph $G$ and $n$ is the number of vertices in $G$. In this short communication, we first introduce a new neighborhood operator that is the key to design reverse search algorithms for enumerating all connected induced subgraphs of order $k$. Based on the proposed neighborhood operator, three algorithms with delay of $O(k\cdot min\{(n-k),k\Delta\}\cdot(k\log{\Delta}+\log{n}))$, $O(k\cdot min\{(n-k),k\Delta\}\cdot n)$ and $O(k^2\cdot min\{(n-k),k\Delta\}\cdot min\{k,\Delta\})$ respectively are proposed. The first two algorithms require exponential space to improve upon the current best delay bound $O(k^2\Delta)$\cite{4} for this problem in the case $k>\frac{n\log{\Delta}-\log{n}-\Delta+\sqrt{n\log{n}\log{\Delta}}}{\log{\Delta}}$ and $k>\frac{n^2}{n+\Delta}$ respectively.
\end{abstract}

\begin{keywords}
Subgraph Enumeration; Connected Induced Subgraphs; Reverse Search.
\end{keywords}

\maketitle

\section{Introduction}

The problem of enumerating all connected induced subgraphs of order $k$ is involved in many applications. Such applications include for example identifying network motifs from biological networks \cite{1}, retrieving keyword queries over RDF graphs \cite{2} and enforcing higher consistency levels in constraint processing \cite{3}. Following is the definition of the problem that this short communication focuses on.

\textit{Problem $GEN(G;k)$ }: Given an undirected graph $G = (V,E)$, the problem is to enumerate all subsets $X\subset V$ of vertices such that $|X|=k$ and the subgraph $G[X]$ induced on $X$ is connected.

 The upper bound on the number of connected induced subgraphs of order $k$ is $n\cdot\frac{(e\Delta)^{k}}{(\Delta-1)k}$, where $\Delta$ is the maximum degree in the input graph $G$ \cite{9,10}. Hence, the problem $GEN(G;k)$ is a computationally difficult problem. An efficient algorithm is of great importance.  In the literature, the algorithms for subgraph enumeration problem are usually evaluated in terms of delay for worst-case running time analysis. The delay is the maximal time that the algorithms spend between two successive outputs. In this work, we also use the upper bound on the delay of the enumeration algorithms as a nontrivial running time bound.

Most previous approaches perform the enumeration procedure by incrementally enlarging the connected subgraphs until the size of such subgraphs is $k$ \cite{1,4,5}. In these approaches, the small subgraphs are expanded by absorbing neighbor vertices. The algorithm presented in \cite{1} starts by assigning each vertex a number as a unique label. Then, the subgraphs are gradually expanded by adding neighbor vertex that has larger label and be neighbored to the newly added vertex but not to a vertex already in the subgraph. In such way, each subgraph is enumerated exactly once. A variant of this algorithm is introduced in the most recent literature \cite{4}. The variant algorithm (denoted as $Simple$) introduces a pruning rule that can avoid unnecessary recursion. With the introduced pruning rule, the algorithm $Simple$ achieves a delay of $O(k^{2}\Delta)$. Furthermore, a so-called $k$-component rule is applied to speed up the enumeration of connected induced subgraphs of large cardinality. Another recent approach introduced in \cite{5} tries to expand the connected induced subgraphs by adding only the validated neighbors. Each neighbor vertex is validated by judging if it has greater distance to the anchor vertex $v$ (the vertex with the smallest vertex identifier in the current subgraph) than the distance from the utmost vertex $u$ (the vertex in the current subgraph with the longest shortest path to the anchor vertex) to $v$ or it has the same distance to $v$ as the distance from $u$ to $v$ but it is lexicographically greater than $u$.

Different from the bottom-up approaches, Elbassioni proposed a reverse search based algorithm for enumerating all connected induced subgraphs of order $k$ in \cite{6}. We call this algorithm $RwD$ ($Reverse Search with Dictionary$) as referred to in \cite{4}. Please note that the idea of reverse search method for enumeration was first introduced in \cite{7}. The reverse search algorithm produces all the solutions by traversing the supergraph $\mathcal{G}$. Each node of $\mathcal{G}$ corresponds to a solution of problem $GEN(G;k)$, that is, a connected vertex set of order $k$. The arcs between nodes $X$ and $X'$ in $\mathcal{G}$ are defined by a neighbor operator: $\mathcal{N}(X)=\{X'\in \mathcal{C}(G;k):|X\cap X'|=k-1\}$, where $\mathcal{C}(G;k)$ denotes the family of vertex sets of order $k$ in $G$. In the following, as in \cite{6}, we distinguish the vertices of $G$ and $\mathcal{G}$ by denoting them respectively as vertices and nodes.

The $RwD$ algorithm initially generates a solution $X_0$ ($G[X_{0}]$ is a connected induced subgraph of order $k$). Based on the initial solution, all the neighbors of $X_0$ in $\mathcal{G}$ are visited by exchanging one vertex. In detail, a vertex $u$ in $X_0$ is deleted, and a common neighbor vertex $v$ of connected components in $X_{0}\setminus\{u\}$ is added to form a connected vertex set $X_{0}\setminus\{u\}\cup\{v\}$. In order to avoid duplicates, each newly generated solution should be checked if it has already been visited. Then, each unvisited neighbor is added to a list or a queue as a candidate to be further explored. This procedure can be carried out in depth-first way or breath-first way until all nodes in $\mathcal{G}$ are visited. The pseudo code of the algorithm can be found in \cite{4}. According to the definition of neighborhood presented in \cite{6}, $|\mathcal{N}(X)|\leq k\cdot min\{(n-k),k\Delta\}$, and each solution is generated in time $O(k(\Delta+\log{k})+\log{n}))$. Thus, the delay of the algorithm is $O(k\cdot min\{(n-k),k\Delta\}\cdot(k(\Delta+\log{k})+\log{n}))$.

\section{Proposed Approaches}

\begin{algorithm}[t]\scriptsize
  \caption{The $IRwD$ Algorithm}
  \KwIn{An undirected graph $G=(V,E)$}
  \KwOut{$\mathcal{K}$ A set of enumerated subgraphs}
  $Queue~~Q = \emptyset$\;
  $\mathcal{K}=\emptyset$\;
  \For{each connected component $C$ in $G$}
  {
    $S = an~initial~solution~in~C$\;
    $Q.enqueue(S)$\;
    $\mathcal{K}=\mathcal{K}\cup\{S\}$\;
    \While{$Q\neq\emptyset$}
    {
       $S = Q.dequeue()$\;
       $output~ S$\;
       $A = ArticulationPoint(S)$\;
      \For{each vertex $v \in S\setminus A$}
      {
         $S' = S\setminus \{v\}$\;
          \For{each neighbor vertex $w$ of $S'$}
          {
                $S'' = S'\cup\{w\}$\;
                \If{$S''\notin \mathcal{K}$}
                {
                      $Q.enqueue(S'')$\;
                      $\mathcal{K}=\mathcal{K}\cup\{S''\}$\;
                }
          }

      }
  }
  }
\end{algorithm}

We assume the input undirected graph is a connected graph, if not we can simply deal with each connected component separately. In \cite{6}, two connected sets $X$ and $X'$ of order $k$ are neighbors if they have $k-1$ vertices in common (the induced subgraph of the $k-1$ vertices in common can be disconnected or connected). In this work, we introduce a different definition of neighborhood of $\mathcal{G}$. For a set $X\in \mathcal{C}(G;k)$, the neighbors of $X$ are obtained from $X$ by exchanging one vertex: $\mathcal{N}(X)=\{X'\in \mathcal{C}(G;k):|X\cap X'|=k-1~and~G[X\cap X']~is~connected\}$ . In other words, any pair of two nodes in $\mathcal{G}$ are neighbors only if the two nodes have $k-1$ vertices in common and the induced graph of their intersection is also connected.  An illustrative example of the supergraph method \cite{7} based on the new neighborhood definition and the neighborhood definition proposed in \cite{6} is shown in Fig. 1. If the supergraph $\mathcal{G}$ constructed according to the introduced new definition of neighborhood is strongly connected, we then can explore all the nodes in $\mathcal{G}$ more efficiently starting from any node in it.

\begin{figure}[t]
  \centering
  \includegraphics[width=0.52\hsize=0.8]{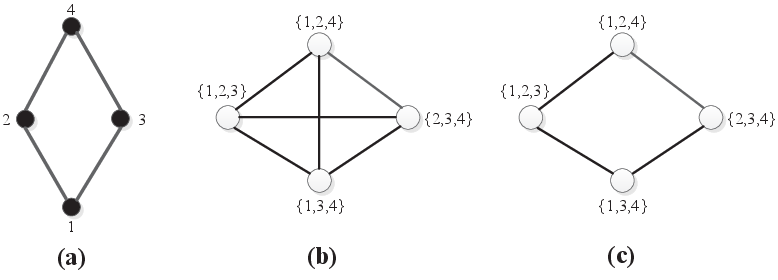}
  \caption{An illustrative example of constructing the supergraph $\mathcal{G}$ ($k=3$). (a) a simple graph $G$ with four vertices. (b) the supergraph  $\mathcal{G}$ constructed based on the neighborhood operator introduced in \cite{6}. (c) the supergraph  $\mathcal{G}$ constructed based on the proposed new neighborhood operator.}
  \label{f2}
\end{figure}

~\newline
Now, we prove that the supergraph $\mathcal{G}$ is strongly connected.

\begin{lemma}
  Let $X$,$Y$ be two distinct elements in $\mathcal{C}(G;k)$. Then there exists vertex sets $X_1$,$X_2$,$\cdot\cdot\cdot$,$X_l$ $\in \mathcal{C}(G;k)$ such that $X_1=X$,$X_l=Y$, $l\leq n-k+1$, and for $i=1,\cdot\cdot\cdot,l-1$, $X_{i+1}\in\mathcal{N}(X_i)$.
\end{lemma}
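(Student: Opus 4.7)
My plan is to prove the lemma by induction on $m := k - |X \cap Y| = |X \setminus Y|$. The base case $m = 0$ gives $X = Y$ and $l = 1$. In the inductive step I will exhibit a single legal swap $X \mapsto X' \in \mathcal{N}(X)$ with $|X' \cap Y| = |X \cap Y| + 1$, and then apply the hypothesis to the pair $(X', Y)$. Since $X \setminus Y \subseteq V \setminus Y$, we have $m \leq n - k$, so the sequence produced has length at most $n - k + 1$ as required.

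The key structural input I will use is that every connected graph on $k \geq 2$ vertices has at least two non-articulation points, namely the leaves of any spanning tree. These are the only vertices of $X$ eligible to be swapped out under the new neighborhood definition, which insists that $G[X \cap X']$ be connected.

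For the productive swap itself, I plan to proceed as follows. First, pick $w \in Y \setminus X$ with a neighbor $x \in X$ in $G$: such a pair exists because $G[Y]$ is connected and $X \neq Y$, while when $X \cap Y = \emptyset$ the connectivity of $G$ itself supplies one. Next, root a DFS spanning tree of $G[X]$ at $x$ and take a leaf $v$ of this tree; then $v \neq x$ (since $k \geq 2$), $G[X \setminus \{v\}]$ is connected, and $w$ is still adjacent to $X \setminus \{v\}$ through $x$, so $X' := (X \setminus \{v\}) \cup \{w\}$ lies in $\mathcal{N}(X)$. Provided $v$ can be chosen in $X \setminus Y$, the swap raises $|X \cap Y|$ by one and the induction closes.

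I expect the main obstacle to be the degenerate case in which every DFS tree of $G[X]$ rooted at a neighbor of $Y \setminus X$ has all its leaves inside $X \cap Y$. My plan for this subcase is to argue via the block--cut tree of $G[X]$: whenever $X \neq Y$ there must exist a leaf block $B$ whose non-cut vertices meet $X \setminus Y$, and any such non-cut vertex is automatically a non-articulation point of $G[X]$ and can serve as $v$. Establishing the existence of such a leaf block, and handling the pathological situation where $X \setminus Y$ seems to consist entirely of cut vertices of $G[X]$ (which one shows forces $Y$ to also contain these cut vertices, leading to a contradiction with $|X| = |Y|$), will require the most care.
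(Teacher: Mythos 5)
Your strategy---induct on $|X\setminus Y|$ and exhibit at each step one legal swap that strictly increases $|X\cap Y|$---is genuinely different from the paper's, and unfortunately the inductive step cannot always be carried out: the ``degenerate case'' you flag is real and is not resolvable by the block--cut-tree argument you sketch. Take $G=C_4$ with vertices $a,b,c,d$ and edges $ab,bc,cd,da$, and let $k=3$, $X=\{a,b,c\}$, $Y=\{a,c,d\}$. Then $X\setminus Y=\{b\}$, and $b$ is the unique cut vertex of $G[X]$ (the path on $a,b,c$); the two leaf blocks $\{a,b\}$ and $\{b,c\}$ have their non-cut vertices $a$ and $c$ inside $X\cap Y$, so no leaf block has a non-cut vertex in $X\setminus Y$, and $Y$ simply fails to contain the cut vertex $b$ without any contradiction with $|X|=|Y|$. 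The obstruction is intrinsic, not an artifact of your argument: the only $3$-set meeting $Y$ in more than $|X\cap Y|=2$ vertices is $Y$ itself, and $Y\notin\mathcal{N}(X)$ because $G[X\cap Y]=G[\{a,c\}]$ is disconnected. Every walk from $X$ to $Y$ in $\mathcal{G}$ must first discard a vertex of $X\cap Y$ (e.g.\ $\{a,b,c\}\to\{b,c,d\}\to\{a,c,d\}$), so the monovariant $|X_i\cap Y|$ cannot be made strictly increasing and your induction does not close. The same example shows that your claimed bound $l\le |X\setminus Y|+1$ is unattainable (the shortest walk has $l=3$, while $|X\setminus Y|+1=2$); it even shows that the bound $l\le n-k+1=2$ in the lemma as stated is too strong for the new neighborhood operator.

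The paper's proof avoids this by tracking a different quantity: fix $z\in X\cap Y$ (after first performing at most $n-2k+1$ swaps along a shortest $X$--$Y$ path when $X\cap Y=\emptyset$), let $C$ be the connected component of $G[X\cap Y]$ containing $z$, contract $C$ to a single vertex $v$ in $G[X]$, delete a leaf $u\ne v$ of a spanning tree of the contracted graph (so connectivity of the residual set is preserved), and add a vertex $w\in Y\setminus X$ adjacent to $C$, which exists because $G[Y]$ is connected and $C\subsetneq Y$. The deleted vertex $u$ may lie in $X\cap Y$, so the total intersection need not grow, but the component of the intersection containing $z$ gains $w$ at every step and reaches size $k$ after at most $k-1$ swaps. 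If you rebuild your proof along these lines, note that the resulting length bound is of the form $\max\{k,\,n-k+1\}$ rather than $n-k+1$; the $C_4$ example above shows this weakening is unavoidable.
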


\begin{proof}
  Similarly to the proof of Lemma 1 in \cite{6}, we first define the $d(Z,Z')$ as the shortest distance between the two vertex sets $Z$,$Z'$ in $G$ ($Z,Z'\in \mathcal{C}(G;K)$). We consider two cases: $d(X,Y)=0$ and $d(X,Y)>0$.

  Case 1. If $d(X,Y)=0$, then there exists at least a vertex $z \in X\cap Y$. We first contract the connected component containing $z$ in $X\cap Y$ as a single vertex $v$ in $G[X]$ and denote the new graph by $G'$. As $G'$ is connected, it has a spanning tree $T$. $T$ has a leaf $u\neq v$. Furthermore, there exists at least a vertex $w\in Y$, $w\notin X$ and $w$ is the neighbor of $v$. We delete $u$ from $X$ and add $w$ to $X$, we then have $X_2=X\cup\{w\}\setminus\{u\}$. We can iteratively perform this procedure to construct $X_3,...X_i$ until $X_i=Y$, where $i\leq k-1$\footnote{There are at most $k-1$ vertices that are in $X$ and not in $Y$. Thus, it is required to remove at most $k-1$ vertices from $X$ to get $Y$.}.

  Case 2. If $d(X,Y)>0$, then $X\cap Y=\emptyset$. Let $v_0,v_1,\cdot\cdot\cdot,v_j$ be the ordered sequence of vertices on the shortest path between $X$ and $Y$ in $G$, where $v_0\in X$ and $v_j\in Y$. Let $T$ be a spanning tree of $G[X]$. Then $T$ has at least a leaf $u\neq v_0$. We delete $u$ and add $v_1$, then we have $X_{2}=X_1\cup\{v_1\}\setminus\{u\}$ ($G[X_{2}]$ is a connected subgraph). We can continue this procedure to construct $X_3,...,X_i$ until $d(X_i,Y)=0$. As $d(X,Y)\leq n-2k$, we will arrive at case 1 after at most $n-2k+1$ such iterations.
\end{proof}

Since the supergraph $\mathcal{G}$ is strongly connected, we can traverse all the solutions (nodes) in $\mathcal{G}$ starting from any node in $\mathcal{G}$. The pseudo code of our proposed algorithm is shown in Algorithm 1. This algorithm, called $IRwD$ ($Improved Reverse Search with Dictionary$) differs from $RwD$ \cite{6} in two points. First, as the induced subgraph of the common part of two neighbors should be connected, only the vertex that is not an articulation point of $S$ can be deleted (the articulation points of $S$ can be quickly found in $O(k\cdot\min\{k,\Delta\})$ time by calling the algorithm proposed by Tarjan\cite{8}, i.e., line 10 of Algorithm 1). Second, as the approach of \cite{6} deletes a vertex in $S$ in each iteration, $S'$ may not be connected. Hence, finding common neighbors of connected components of $S'$ is required to ensure the connectivity of the generated solution, and it is one of most time-consuming procedure inside each call. The searching of the common neighborhood of connected components of $S'$ takes $O(k(\Delta+\log{k}))$ time \cite{6}. This time-consuming job can be avoided in our approach.

~\newline
The delay of $IRwD$ algorithm is given and proven as follows.

\begin{lemma}
   The delay of the proposed algorithm $IRwD$ is $O(k\cdot min\{(n-k),k\Delta\}\cdot(k\log{\Delta}+\log{n}))$, where $\Delta$ is the maximum degree of $G$.
\end{lemma}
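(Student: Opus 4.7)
The plan is to express the delay as the product of (a) the number of candidate solutions $S''$ considered during one iteration of the while loop and (b) the cost of handling one such candidate, and then to verify that the one-time per-iteration overhead is dominated by this product.

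First, I would bound $|\mathcal{N}(S)|$ by a direct counting argument tailored to Algorithm~2. Each candidate $S''$ is obtained by removing a non-articulation vertex $v\in S\setminus A$ (at most $k$ choices) and inserting a vertex $w$ adjacent to $S'=S\setminus\{v\}$. Because $|S'|=k-1$, any such $w$ lies either among the at most $(k-1)\Delta$ vertices adjacent to $S'$ or among the at most $n-k$ vertices outside $S$, so there are at most $\min\{n-k,k\Delta\}$ choices for $w$. Multiplying yields the first factor $k\cdot\min\{n-k,k\Delta\}$, which is also an upper bound on $|\mathcal{N}(S)|$ under the new neighborhood definition.

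Next, I would account for the per-iteration fixed costs. By Tarjan's algorithm~\cite{8}, the articulation points of $G[S]$ are computed in $O(k+|E(G[S])|)=O(k\cdot\min\{k,\Delta\})$ time, which is bounded by the number of candidates and therefore does not dominate. Similarly, assembling the neighbor list of $S'$ for a given $v$ takes $O(k\Delta)$ time in total and is absorbed once distributed over the candidates it produces. The nontrivial per-candidate cost then has two ingredients: (i) updating a canonical representation of $S'' = (S\setminus\{v\})\cup\{w\}$ from that of $S$ by a single swap, which I would argue runs in $O(k\log\Delta)$ time using a sorted-key data structure tailored to the bounded-degree neighborhood, and (ii) probing the dictionary $\mathcal{K}$ in $O(\log n)$ time via a balanced search tree keyed by these canonical representations. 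Summing gives the claimed second factor $k\log\Delta+\log n$.

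The step I expect to be the main obstacle is justifying the $k\log\Delta$ term cleanly, since this is precisely where the improvement over the RwD bound $k(\Delta+\log k)$ of~\cite{6} originates: under the new neighborhood definition $S'$ is always connected, so the expensive search for common neighbors across multiple connected components of $S'$ is eliminated, and what remains is the incremental maintenance of the canonical key plus the dictionary probe. Once the fixed overhead is charged as above and the per-candidate cost is verified, multiplying the two factors and observing that each output is separated from the next by a single pass through the while loop establishes the delay $O(k\cdot\min\{(n-k),k\Delta\}\cdot(k\log\Delta+\log n))$.
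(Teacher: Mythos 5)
Your first factor is handled correctly and matches the paper: at most $k$ choices of a non-articulation vertex $v$ and at most $\min\{n-k,k\Delta\}$ choices of $w$, giving $|\mathcal{N}(S)|\leq k\cdot\min\{(n-k),k\Delta\}$, and the per-iteration overheads (Tarjan's algorithm, assembling the neighbor list of $S'$) are indeed absorbed. You also correctly identify the conceptual source of the improvement over $RwD$: since $S'$ is always connected under the new neighborhood definition, the $O(k(\Delta+\log k))$ search for common neighbors of the components of $S'$ disappears.

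The gap is in your account of the second factor $k\log\Delta+\log n$. The paper obtains it in one step: generating a neighbor costs $T_1=O(1)$, and the only non-constant per-candidate cost is the dictionary lookup $T_2=\log|\mathcal{S}|$, where $|\mathcal{S}|$ is bounded by the known upper bound $O\bigl(n(e\Delta)^k/((\Delta-1)k)\bigr)$ on the number of connected induced subgraphs of order $k$ (cited in the introduction), whence $T_2=O(k\log\Delta+\log n)$. You instead split the factor into an $O(k\log\Delta)$ cost for maintaining a ``canonical representation'' under a single swap and an $O(\log n)$ dictionary probe. Neither piece is justified: it is unclear what sorted-key structure yields $O(k\log\Delta)$ for one insertion/deletion, and the $O(\log n)$ probe is actually wrong as stated, because the dictionary $\mathcal{K}$ can contain exponentially many entries (up to $|\mathcal{C}(G;k)|$), so a balanced search tree over it has depth $\Theta(k\log\Delta+\log n)$, not $O(\log n)$. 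You arrive at the right product, but by an accounting that does not hold up; the missing idea is precisely that the entire second factor is the logarithm of the dictionary size, bounded via the counting result on $|\mathcal{C}(G;k)|$.
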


\begin{proof}

It is clear that the first solution can be found in $O(k\Delta)$ by traversing $G$ in depth-first search (DFS) starting from an arbitrary vertex. Now, we show that in $O(k\cdot min\{(n-k),k\Delta\}\cdot(k\log{\Delta}+\log{n}))$ time we either find a new node (subgraph) or terminate the enumeration. We compute the set of all the articulation points ($A$) in $S$ by calling Tarjan's algorithm before the nested for loop (line 10, Algorithm 1). The running time of Tarjan's algorithm is $O(k\cdot\min\{k,\Delta\})$. Given a node $S$, the nested for loop (lines 11-17, Algorithm 1) requires at most $N(T1+T2)$ time, where $N$ is the maximum number of the neighbors of $S$, $T1$ is the time to generate a neighbor and $T2$ is the time to check if the neighbor has already been generated before. Based on the introduced new definition of neighborhood, we have $N \leq k\cdot min\{(n-k),k\Delta\}$, and by the introduced neighborhood operator, $T1=O(1)$. By maintaining a priority queue $\mathcal{K}$ on the set of generated solutions (the subgraphs that are already discovered and processed), we can ensure that $T2=\log{|S|}$ using a balanced binary search tree on the solutions generated, where $|S|$ is the maximum number of generated subgraphs. The currently known bound on $|S|$ is $O(n\cdot \frac{(e\Delta)^k}{(\Delta-1)k})$, which implies that $T2$ is $O(k\log\Delta+\log n)$. Thus, the time for executing the nested for loop is at most $O(k\cdot min\{(n-k),k\Delta\}\cdot(k\log{\Delta}+\log{n}))$. After executing the nested for loop, we have two cases: If the queue $Q$ is empty, the algorithm terminates; If the queue $Q$ is not empty, we pick the first node from the queue and output it. Therefore, the overall delay of $IRwD$ is $O(k\cdot min\{(n-k),k\Delta\}\cdot(k\log{\Delta}+\log{n}))$.
\end{proof}

In \cite{4}, an algorithm called $Simple$ with a delay of $O(k^2\Delta)$ is presented. This delay is the current best delay in the literature. Compared with the best delay proven in \cite{4}, it can be seen that, for small values of $k$ the delay of $Simple$ is better than the delay of $IRwD$ and for large $k$ (e.g., $k$ close to $n$) the delay of $IRwD$ is better. More precisely, the $IRwD$ algorithm has a better delay than the $Simple$ algorithm in the case  $k>\frac{n\log{\Delta}-\log{n}-\Delta+\sqrt{n\log{n}\log{\Delta}}}{\log{\Delta}}$. In the proposed $IRwD$ algorithm, similar to $RwD$\cite{6}, we also maintain a balanced binary search tree on the solutions generated. Each time a solution is generated, we check whether the solution has already been stored in the balanced binary search tree. In this work, we propose another method to check whether a solution has been already visited. Instead of using a balanced binary search tree to store the solutions visited, we maintain a binary search tree on the sequences of visited vertices. The height of the binary search tree is $n$, where $n$ is the number of vertices in $G$. In the binary search tree, each node corresponds to a vertex in $G$. 1- and 0-branches at node $i$ represents the addition or not of the vertex $i$ of $G$ respectively. Fig.2 shows an example of the binary search tree constructed for the set of discovered solutions $\{\{1,2,3\},\{1,2,4\},\{2,3,4\}\}$.  With this method, we have a variant algorithm of $IRwD$, and we arrive at the following.

\begin{figure}[t]
  \centering
  \includegraphics[width=0.32\hsize=0.8]{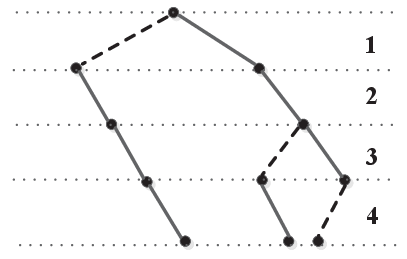}
  \caption{The binary search tree corresponds to the set of discovered solutions $\{\{1,2,3\},\{1,2,4\},\{2,3,4\}\}$ }
  \label{f2}
\end{figure}

\begin{corollary}
The variant algorithm of $IRwD$ solves $GEN(G;k)$ for any graph $G$ with delay $O(k\cdot min\{(n-k),k\Delta\}\cdot n)$, where $\Delta$ is the maximum degree of $G$.
\end{corollary}

\begin{proof}

The only difference between the variant algorithm and $IRwD$ is the time spent on checking if the neighbor has already been generated before. With the binary search tree on the visited sequences of vertices, we can check whether a neighbor has already been visited in $O(n)$ time, which implies the overall delay of the variant algorithm is  $O(k\cdot min\{(n-k),k\Delta\}\cdot n)$. 
\end{proof}

Comparing the aforementioned delay $O(k\cdot min\{(n-k),k\Delta\}\cdot n)$ with the current best delay $O(k^2\Delta)$ in the literature, it can be seen that the variant algorithm achieves a better delay in the case $k>\frac{n^2}{n+\Delta}$. If $k>\frac{n-\log n}{\log \Delta}$, then $n<k\log{\Delta}+\log{n}$. Thus, the check if the neighbor has already been generated using proposed binary search tree is faster than the check using balanced binary tree \cite{6} in this case. 

Furthermore, it should be noted that both the proposed algorithm and the algorithm presented in \cite{6} use a dictionary to store all previously detected solutions. Hence, the algorithms requires exponential space ($O(n+m+k|\mathcal{C}(G,k)|)$) (the variant algorithm of $IRwD$ has a slightly improved space bound of $O(n+m+|\mathcal{C}(G,k)|)$ ). However, the algorithm $Simple$ of \cite{4} requires only linear space. In order to avoid the use of exponential space, the author of \cite{6} also proposed another algorithm called $RwP$ that has slightly worse delay ($O((k\cdot min\{(n-k),k\Delta\})^2\cdot(\Delta+\log{k}))$) but requires only linear space. The $RwP$ algorithm also used the reverse search method. The main difference between $RwD$ and $RwP$ is that $RwP$ applies a parent function to ensure every neighbor is enumerated exactly once instead of storing all the discovered subgraphs. In $RwP$, all solutions are sorted lexicographically and each solution has a unique parent. A modified version of the algorithm is presented in \cite{4}. The modified algorithm has a delay of  $O(k^3\cdot\Delta \cdot min\{(n-k),k\Delta\})$ by using a slightly different predecessor check method.

As the supergraph $\mathcal{G}$ constructed by the new neighbor operator is strongly connected, according to the definition of predecessor function in \cite{4} and the definition of parent node in \cite{6}, the methods of finding the predecessor or the parent can also be used in our proposed algorithm in order to have a linear space bound. The pseudo code of our algorithm that adopts the predecessor check to avoid multiple enumerations can be found in Algorithm 2 (the algorithm is denoted as $IRwP$). The predecessor of a connected induced subgraph $S$ is defined as follows: let $S$ be a connected induced subgraph of order $k$, $S\setminus \{u\}\cup\{v\}$ is the predecessor of $S$ if $S\setminus \{u\}\cup \{v\}$ and $S\setminus\{u\}$ are connected, where the vertex $u\in S$ with lowest index and the vertex $v\notin S$ with highest index. The algorithm works as follows: We first assign each vertex in $G$ a number based on depth-first search rather than an arbitrary order. The depth-first search implies a lexicographical ordering of the solutions. Each solution has a unique predecessor according to the defined lexicographic order. We then start from a lexicographically largest solution (the lexicographically largest solution can be found in time $O(k\Delta)$ by traversing the DFS tree that defines the lexicographic order on the vertex sets.), and traverse the neighbors of each node $S$ in $\mathcal{G}$ with DFS or breath-first search  (BFS). If a neighbor $S'$ of $S$ is considered, we only output $S'$ (or put it into the queue) if $S$ is the predecessor of $S'$.

The original algorithm proposed in \cite{6} adopts DFS to find all the connected induced subgraphs of order $k$. Komusiewicz, et al. implemented the algorithm ($RwP$) with BFS. It worthy noted that the implementation with BFS has a space bound of $O(n+m+k|\mathcal{C}(G,k)|)$. However, the claimed delay and a linear space can be ensured if we implement it with DFS and distinguish between nodes of odd and even depth in the search tree \cite{4,6}. For a straightforward comparison with the most recent work in \cite{4}, we also provide the BFS version of $IRwP$ in this letter.

\begin{algorithm}[t]\scriptsize
  \caption{The $IRwP$ Algorithm}
  \KwIn{An undirected graph $G=(V,E)$}
  \KwOut{A set of enumerated subgraphs}
  $Queue~~Q = \emptyset$\;
  \For{each connected component $C$ in $G$}
  {
    $S = lexicographically~~largest~solution~in~C$\;
    $Q.enqueue(S)$\;
    \While{$Q\neq\emptyset$}
    {
       $S = Q.dequeue()$\;
       $output~ S$\;
       $A = ArticulationPoint(S)$\;
      \For{each vertex $v \in S\setminus A$}
      {
          $S' = S\setminus \{v\}$\;
          $N = neighbor~vertices~of~S'$\;
          \For{ vertex $w \in N$}
          {
                $S'' = S'\cup\{w\}$\;
                $A' = ArticulationPoint(S'')$\;
                \If{ $w$ is the vertex with lowest index in $S''\setminus A'$ and $v$ is the vertex with highest index in $N$}
                {
                        $Q.enqueue(S'')$\;
                }
          }

      }
  }
  }
\end{algorithm}

\begin{lemma}
   The delay of the proposed algorithm $IRwP$ is  $O(k^2\cdot min\{(n-k),k\Delta\}\cdot min\{k,\Delta\})$, where $\Delta$ is the maximum degree of $G$.
\end{lemma}

\begin{proof}

We first show the time spent on checking if $S$ is the predecessor of $S''$ (lines 14-16, Algorithm 2). The predecessor check is executed as follows: We find the articulation points of $S''$, this can be done in $O(k\cdot\min\{k,\Delta\})$ time. Then, we check if the vertex $w$ is the vertex lowest index in $S''\setminus A'$ and the vertex $v$ is the vertex with highest index in $N$. This check takes $O(1)$ time \footnote{To achieve the claimed $O(1)$ time, we can pick the vertex with lowest index from $S''$ and $A'$ respectively when we generate $S''$ and $A'$ (line 13 and 14, Algorithm 2). This operation should not increase the overall delay. For picking the vertex with highest index from $N$, we can perform the selection in a similar way.}. Moreover, the predecessor check is called at most $k\cdot min\{(n-k),k\Delta\}$ times. Therefore, the overall delay is $O(k^2\cdot min\{(n-k),k\Delta\}\cdot min\{k,\Delta\})$. Compared with the delay of $Simple$, 

\end{proof}

The bottleneck for the delay of $RwP$ is the time spent on the predecessor check. The original $RwP$ algoithm\cite{6} and the modified $RwP$ algorithm\cite{4} require $O(k(\Delta+\log k)min\{(n-k),k\Delta\})$ and $O(k^2\Delta)$ time for executing the predecessor check respectively. However, the predecessor check (lines 14-16) proposed in this work takes only $O(k\cdot min\{k,\Delta\})$ time. Hence, our proposed $IRwP$ algorithm has a better delay than both the original $RwP$ algorithm and the modified $RwP$ algorithm. Both the $IRwP$ algorithm and the $Simple$ algorithm \cite{4} require linear space. Comparing the delay of $IRwP$ and the delay of $Simple$, it can be seen that $Simple$ has a better delay in case of $1<k<n-1$.

\section{Conclusion}

In this work, we proposed a new neighborhood operator for constructing the supergraph $\mathcal{G}$ of the connected induced subgraphs of order $k$, and proved that the supergraph $\mathcal{G}$ constructed by the neighborhood operator is strongly connected. From a theoretical point of view, we improved upon the current best delay bound of algorithms for enumerating connected induced subgraphs in the case of large cardinality. 

\section{Acknowledgement}
The authors would like to thank the various grants from the National Natural Science Foundation of China (No. 61404069), Scientific Research Project of Colleges and Universities in Guangdong Province (No. 2021ZDZX1027), Guangdong Basic and Applied Basic Research Foundation (2022A1515110712), and STU Scientific Research Foundation for Talents (No. NTF20016 and No. NTF20017).



\end{document}